\documentclass{llncs}
\usepackage{amsmath}
\usepackage{amssymb}
\usepackage{algorithm}
\usepackage{algpseudocode}
\usepackage{graphicx}
\usepackage{color}
\usepackage{soul}
\usepackage{subfig}


\newcommand{\ket}[1]{|#1\rangle}

\newcommand{\braket}[2]{\langle #1|#2 \rangle}
\newcommand{\half}{\frac{1}{2}}
\newcommand{\sqrttwo}{\frac{1}{\sqrt{2}}}
\newcommand{\sqrtn}[1]{\frac{1}{\sqrt{2^{#1}}}}
\newcommand{\xor}{\oplus}
\newcommand{\A}{\mathcal{A}}

\renewcommand{\to}{\longrightarrow}
\newcommand{\autof}{\breve{f}}

\newcommand{\hatf}{\hat{f}}
\newcommand{\E}{\mathbb{E}}
\newcommand{\Ftwo}[1]{\mathbb{F}_2^{#1}}

\begin{document}

\title{Efficient Quantum Algorithms related to Autocorrelation Spectrum} 
\author{Debajyoti Bera\inst{1}\thanks{Corresponding author} \and Subhamoy Maitra\inst{2} \and Tharrmashastha SAPV\inst{1}}
\institute{Computer Science and Engineering Department, IIIT-D, New Delhi 110020, India. \email{dbera@iiitd.ac.in, tharrmashasthav@iiitd.ac.in} \and Applied Statistics Unit, Indian Statistical Institute, 203 B T Road, Kolkata 700108, India. \email{subho@isical.ac.in}}
\maketitle

\begin{abstract}
In this paper, we propose efficient probabilistic algorithms for several problems regarding the autocorrelation 
spectrum. First, we present a quantum algorithm that samples from the Walsh spectrum of any derivative of $f()$. 
Informally, the autocorrelation coefficient of a Boolean function $f()$ at some point $a$ measures the average 
correlation among the values $f(x)$ and $f(x \oplus a)$. The derivative of a Boolean function is an extension of 
autocorrelation to correlation among multiple values of $f()$. The Walsh spectrum is well-studied primarily due 
to its connection to the quantum circuit for the Deutsch-Jozsa problem. We extend the idea to 
``Higher-order Deutsch-Jozsa" quantum algorithm to obtain points corresponding to large absolute values in 
the Walsh spectrum of a certain derivative of $f()$. Further, we design an algorithm to sample the input points 
according to squares of the autocorrelation coefficients. Finally we provide a different set of algorithms for 
estimating the square of a particular coefficient or cumulative sum of their squares.
\end{abstract}
{\bf Keywords:} Autocorrelation, Boolean function, Cryptology, Quantum computing, Walsh spectrum.

\section{Introduction}\label{sec:intro}
Boolean functions are very important building blocks in cryptology, learning theory and coding theory. 
Different properties of Boolean functions can be well understood by different spectra; specifically, Walsh and autocorrelation spectra are two most important tools for cryptographic purposes. 
For a Boolean function $f()$, these spectra can be thought as the list of all values of the Walsh transform and autocorrelation transform, respectively, of $f()$. 
We use Walsh coefficients and autocorrelation coefficients to indicate the individual values in those spectra.

Shannon related these spectra to confusion and diffusion of cryptosystems long ago~\cite{Shannon:2001:MTC:584091.584093}. Confusion of a Boolean function used in a cryptosystem can be characterized by a Walsh spectrum with low absolute values -- such functions are known to resist linear cryptanalysis\cite{chabaud1994links}; similarly, functions with less diffusion (high absolute value in the autocorrelation spectrum) may make a cryptosystem vulnerable against differential attacks (see for example ~\cite{DBLP:journals/tit/TangM18} and the references therein). 
Walsh spectrum (often referred to as Fourier spectra for Boolean functions) has been shown to be useful for learning Boolean functions as well~\cite{Mansour1994}.

Analyzing these spectra and designing functions with specific spectral properties are therefore important tasks. 
This problem becomes challenging for large functions. 
Such large functions may arise while modelling a complete stream or block cipher as a Boolean function with number of inputs equal to the key size in bits. 
Modelling such a complicated Boolean function by analysing the spectra is clearly elusive~\cite{DBLP:journals/dcc/SarkarMB17}. 
In classical domain, for an $n$-input 1-output Boolean function, generation of complete Walsh or autocorrelation spectrum requires $O(2^n)$ space and $O(n2^n)$ time. 
Needless to mention that for analysing a cipher or learning a Boolean function, it is easier to locate the points if there are high coefficients in a spectrum. 
Thus it makes sense to design techniques for sampling points with high coefficients and estimate the high coefficients in which a Boolean function can be used 
only as a black-box.

The motivation in cipher design is to obtain a Boolean function for which the maximum absolute value in both the spectra is minimized (for autocorrelation we consider non-zero points only). 
While there are many such examples and constructions of such functions in literature related to combinatorics, cryptography and coding theory, such Boolean functions are not implemented in a straightforward manner such as simple circuits or truth/look-up tables. 
This is because it is very hard to implement a complex Boolean function on large number of variables (say 160) in this manner due to exponential circuit size. 
For example, in stream cipher (one may also consider the specific example of Grain v1~\cite{Grain_h_function}), 
LFSR/NFSRs (Linear/Nonlinear Feedback Shift Registers) are used. 
The secret key (say 80 bits) and the public IV (Initialization Vector, say 80 bits again) are loaded in the initial state. Then the initial state is evolved as a Deterministic Finite Automaton for many (say 160 or 200) steps. The output bit is generated by combining some selected bits (say 15) from the LFSR/NFSRs. Then we start generating the output bits which is used as key stream bits for cryptographic purposes. 
Now if you consider the initial key and IV as the inputs to a Boolean function and the key stream bit at any instance as an output, this is a Boolean function with 160 input bits and one output bit. 
Modelling such a complicated Boolean function by analysing the spectra is practically not possible. 
For more details, one may refer to~\cite{DBLP:journals/dcc/SarkarMB17}. However, if the complete circuit can be implemented in quantum paradigm, then one may have much better efficiency in mounting the attacks.

The situation is well settled for the Walsh spectrum.
Walsh spectrum of a function $f:\{0,1\}^n \to \{0,1\}$ is defined as the following function~\footnote{\label{foot:1}The normalization factor used depends upon the application but has no bearing on properties of interest.} from $\{0,1\}^n$ to $\mathbb{R}[-1, 1]$ in which $x \cdot y$ stands for the $0-1$ valued expression 
\smallskip
$\xor_{i=1 \ldots n} x_i y_i$:\\
\centerline{$\displaystyle\mbox{for $y \in \{0,1\}^n$,\quad} \hatf(y) = \frac{1}{2^n} \sum_{x \in \{0,1\}^n} (-1)^{f(x)} (-1)^{y \cdot x}$}
\smallskip

\begin{figure}
    \centering
    \includegraphics[width=0.4\linewidth]{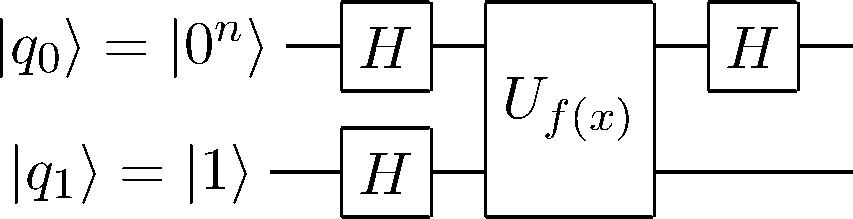}
    \caption{Circuit for Deutsch-Jozsa algorithm (without measurement)\label{fig:DJ}}
\end{figure}

The Deutsch-Jozsa algorithm~\cite{Deutsch1992}, even though usually described as solving a different problem, makes only {\em one} query to $U_f$ (a standard unitary implementation of $f()$) and at the end, puts the second register in the state $\ket{1}$ and the 
first register in the state $\sum_{z \in \Ftwo{n}} \hatf(z) \ket{z}$; the quantum circuit for the same is illustrated in Figure~\ref{fig:DJ}.
Measuring the second register in the standard basis generates a state $\ket{z}$ with probability $\hatf(z)^2$. Note that Walsh coefficients do satisfy $\sum_z \hatf(z)^2 = 1$ (this is due to Parseval's theorem); thus the Deutsch-Jozsa algorithm can be considered as an efficient sampling algorithm for Walsh coefficients~\cite{maitra2005deutsch}.
So if one can implement a stream cipher (a Boolean function) as a quantum oracle~\cite{QuantumAES2016}, then it is possible to sample high points in a Walsh spectrum in constant time with linear number of gates and that enables us to answer several questions related to the spectrum~\cite{Xie2018}.

In contrast to the Walsh spectrum, the autocorrelation spectrum is less studied.
It is defined as the following transformation~\footnotemark[\value{footnote}] from $\{0,1\}^n$ to $\mathbb{R}[-1, 1]$.\\
\centerline{$\displaystyle \mbox{for $a \in \{0,1\}^n$, \quad} \autof(a) = \frac{1}{2^n} \sum_{x \in \{0,1\}^n} (-1)^{f(x)} (-1)^{f(x \xor a)}$}
\smallskip

The entire autocorrelation spectrum can be obtained by first computing the Walsh spectrum (using the well-known ``fast Walsh-Hadamard transform'' algorithm), then squaring each of the coefficients, and finally applying the same transform once more on this squared spectrum. 
This approach requires $2^n$ many calls to $f()$, $n2^n$ other operations and space complexity of $2^n$. 

However, a question remains that {\em what} can be found out about the autocorrelation spectrum in $o(2^n)$, preferably polynomial, time.

\begin{itemize}
    \item Especially, can we identify the points with high coefficients? 
    \item Can we estimate a particular coefficient? 
\end{itemize}

Counting and sampling often go hand-in-hand, so one would also like to sample from a distribution proportional to the coefficients.
It should be noted that $\sum_a \autof(a)^2 \in [1,2^n]$ unlike Walsh coefficients, therefore, it appears difficult to get a quantum sampling algorithm like Deutsch-Jozsa as an immediate corollary.

The quantum algorithms we propose in this paper address these questions. Naturally, in terms of autocorrelation spectrum, such algorithms will be able to expose the weaknesses of a Boolean function (used in a cryptographic primitive) better than the classical 
approaches. There are quite a few important research results related to quantum cryptanalysis of symmetric ciphers~\cite{DBLP:conf/crypto/KaplanLLN16,DBLP:journals/tosc/KaplanLLN16,DBLP:conf/asiacrypt/ChaillouxNS17}. 
A recent work~\cite{DBLP:conf/asiacrypt/Leander017} in this direction considered merging the ideas from Grover's~\cite{Grover1998} and Simon's~\cite{Brassard1997} algorithms. 
However, there has been no specific attempt to solve concrete problems related to the autocorrelation spectrum.
This we present in this paper.

One of the ideas used by us is that of amplitude amplification which is the underlying engine behind Grover's algorithms. 
However, our approach is very different from that of Simon's algorithm even though it is tempting to use this algorithm since $\autof(a)=1$ iff $f(x) = f(x \xor a)$ for all $x$ and the latter is one of main promises held by $f$ in the Simon's problem. 
First, another condition on $f$, i.e., if $f(x) = f(y)$ then $x=y \xor a$, may not necessarily hold for 1-bit functions and secondly, Simon's algorithm is specifically designed for finding any such $a$ and not sampling according to a distribution proportional to $\autof(a)$.

Another important measure related to autocorrelation spectrum is the sum-of-squares indicator. Naturally it is better if this value 
is low. 
\begin{definition}[Sum-of-squares indicator]\label{defn:s_f}
The sum-of-squares indicator for the characteristic of $f$ is defined as
$$\sigma_f = \sum_{a \in \Ftwo{n}} \autof(a)^2$$
\end{definition}
It is known that $1 \le \sigma_f \le 2^n$. In particular, $\sigma_f=1$ if $f$ is a Bent function and $\sigma_f=2^n$ if $f$ is a linear function. A small $\sigma_f$ indicates that a function satisfies the {\em global avalanche criteria} (GAC).

\subsection{Outline}
The results in this paper answer the questions of sampling and estimation that
were raised above.

In Section~\ref{sec:derv-sampling} we present a generalization of the Deutsch-Jozsa problem that we name as ``Higher-order Deutsch-Jozsa''(HoDJ), which is related to the derivatives of a Boolean function. Higher-order derivatives capture the correlation among multiple output values of the same function and is important for constructing cryptographic hash functions that are resistant to linear attack, differential attack, cube attack, etc.

We then discuss a quantum algorithm whose output is a random sample from a distribution that is proportional to the Walsh coefficients of 
any specific higher-order derivative. For $k$-th order derivative, the algorithm uses only $n+1$ {\em additional} qubits, makes $2^k$ calls to the function and uses altogether $O(n2^k)$ gates that is a meagre fraction compared to the usual exponential (in $n$) time and space complexity seen in classical algorithms.

The first-order derivative is also known as the autocorrelation spectrum so this sampling algorithm can be used to generate samples according to the distribution of the {\em Walsh coefficients} of the autocorrelation coefficients. By making a subtle observation, we show how to actually sample according to the autocorrelation spectrum itself. We are not aware of any classical sampling algorithm for the autocorrelation spectrum and the only algorithm known for generating the entire spectrum, which involves computing Walsh transformation twice and is no doubt an overkill for the task of sampling, incurs $\Theta(2^n)$ space complexity and $\Theta(n2^n)$ time complexity. In comparison to it, our quantum algorithm has $O\left( n\tfrac{2^{n/2}}{\sqrt{\sigma_f}} \log \tfrac{1}{\delta} \right)$ time complexity (exhibiting a quadratic speedup) and $2n+1$ space complexity; here $\delta$ indicates the probability of failure. If $\sigma_f$ is not too small, say $\tfrac{2^n}{poly(n)}$, then the time complexity shows an exponential speedup over the classical one. We explain this algorithm for autocorrelation sampling and discuss its properties in Section~\ref{sec:sampling}.

We next move on to estimating algorithms in Section~\ref{sec:estimation}. First, in Subsection~\ref{subsec:autocor-estimation} we give a quantum algorithm to estimate the autocorrelation coefficient at any given point with high accuracy, denoted $\epsilon$, and low error, denoted $\delta$. Our algorithm makes $\Theta\left( \tfrac{1}{\epsilon} \log \tfrac{1}{\delta} \right)$ calls to the function (rather, a quantum oracle for the same). This is almost square-root of the known classical complexity of $O\left( \tfrac{1}{\epsilon^2} \log \tfrac{1}{\delta} \right)$. We explain why the sampling techniques that we designed {\em cannot} be used to design an efficient estimation algorithm, and instead, design our algorithm using the idea of a ``swap-test''.

Our final contribution is a quantum algorithm to estimate the sum-of-squares $\sigma_f$; this we describe in Subsection~\ref{subsec:est_s_f}. We explain that a classical sampling based approach requires $O\left( \tfrac{2^{2n}}{\epsilon^2} \log \tfrac{1}{\delta} \right)$ calls to $f$ ($\epsilon$ would generally be greater than $1$ for estimating $\sigma_f$ since $\sigma_f \in [1,2^n]$) and then describe a quantum approach that displays quadratic speedup and only makes $O\left( \tfrac{2^{n}}{\epsilon} \log \tfrac{1}{\delta} \right)$ calls.


\section{Sampling from Higher-order Derivative}\label{sec:derv-sampling}
Higher-order derivatives of a Boolean function was explicitly introduced, in the context of cryptanalysis, by Lai~\cite{Lai1994}.

\begin{definition}[Derivative]
Given a point $a \in \{0,1\}^n$, the (first-order) derivative of an $n$-bit function $f$ {\em at $a$} is defined as $$\Delta f_a(x) = f(x \oplus a) \xor f(x)$$ For a list of points $\A=(a_1, a_2, \ldots, a_k)$ (where $k \le n$) the $k$-th derivative of $f$ at $(a_1,a_2, \ldots, a_k)$ is recursively defined as $$\Delta f^{(k)}_{\A}(x)  = \Delta f_{a_k} (\Delta f_{a_1, a_2, \ldots, a_{k-1}}^{(k-1)}(x)),$$ where $\Delta f_{a_1, a_2, \ldots, a_{k-1}}^{(k-1)}(x)$ is the $(k-1)$-th derivative of $f$ at points $(a_1,a_2, \ldots, a_{k-1})$. 
The $0$-th derivative of $f$ is defined to be $f$ itself.
\end{definition}
Higher-order derivatives form the basis of many cryptographic attacks, especially those that generalize the differential attack technique against block ciphers such as Integral attack, AIDA, cube attack, zero-sum distinguisher, etc.
These attacks mostly revolve around the algebraic degree of a higher-order derivative. Let $deg(f)$ denote the algebraic degree of some function $f$. 
It is known that $deg(\Delta f^{(i+1)}) \le deg(\Delta f^{(i)}) - 1$ and if $f$ is an $n$-bit function then $\Delta f^{(n)}$ is a constant function. 
Thus if a function has the degree of its $i$-th order derivative, at some $(a_1, a_2, \ldots a_i)$, to be a constant, then this fact is essentially a beacon for mounting an attack if $i \ll n$.
Therefore, it is central to study the algebraic degree and other properties of higher-order derivatives, and to the best of our knowledge, we provide the first algorithms for these tasks.

Specifically, we show how to efficiently sample from the Walsh-Hadamard spectrum of the $i$-th order derivative. 
This allows us to estimate if a higher-order derivative of $f$ is biased towards any linear function, thereby partly answering the question above since the Walsh-Hadamard transform of a linear function is constant.

Despite the complicated expression for computing $\Delta f^{(k)}$, it has an equivalent expression that we shall use for our results.
For any multiset $S$ of points (including $S = \emptyset$), define the notations $X_s = \bigoplus_{a \in S} a$ and $f(x \xor S) = f(x \xor X_s)$. 
In the case of $S = \emptyset$, it can be noted that $X_s$ is the empty string and hence $f(x \xor S) = f(x)$.
The $i$-th derivative of $f$ at $\A=(a_1,a_2, \ldots a_i)$ can be shown\footnote{The proof is present in~\cite{Lai1994}} to be $$\Delta f_{\A}^{(i)} (x) = \bigoplus_{S \subseteq A} f(x \xor S)$$ where $S \subseteq A$ indicates all possible sub-lists of $\A$ (including duplicates, if any, in $\A$).
For example, the second-order derivative at a pair of points $(a,b)$ can be written as $$\Delta f^{(2)}_{(a,b)} = {f(x)} \xor {f(x \xor a)} \xor {f(x \xor b)} \xor {f(x \xor a \xor b)}.$$
For the sake of brevity, we will drop the superscript $(i)$ if it is clear from the list $\A$.

Now we describe a quantum circuit that generates the Walsh-Hadamard spectrum of the $k$-derivative of an $n$-bit function $f$ at some set of points $\A = (a_1, a_2, \ldots a_k)$.
We refer to the circuit as $HoDJ^k_n$ (``Higher-order Deutsch-Jozsa'').

For calling $f$ we use the standard unitary operator $U_f : \ket{x}\ket{b} \mapsto \ket{x} \ket{b \xor f(x)}$ where $x \in \{0,1\}^n$ and $b \in \{0,1\}$. We use $\ket{+}$ and $\ket{-}$ to denote the states $\sqrttwo(\ket{0}+\ket{1})$ and $\sqrttwo(\ket{0}-\ket{1})$, respectively;
observe that $U_f \ket{x} \ket{+} = \ket{x} \ket{+}$ and $U_f \ket{x} \ket{-} = (-1)^{f(x)} \ket{x} \ket{-}$.

The circuit for $HoDJ^k_n$ acts on $k+2$ registers, $R_1, \ldots R_k, R_{k+1}, R_{k+2}$ that are initialized as 
\begin{itemize}
    \item $R_{1}$ has one qubit that is initialized to $\ket{1}$,
    \item $R_2$ consists of $n$-qubits that is initialized to $\ket{0^n}$,
    \item and each of $R_3 \ldots R_{k+2}$ consists of $n$-qubits in which $R_{2+t}$ is initialized to $a_t$ of $\A$.
\end{itemize}

The circuit itself is a generalization of the quantum circuit for the Deutsch-Jozsa problem~\cite{Deutsch1992} and uses the ability of this circuit to generate a distribution of Walsh-Hadamard coefficients that was explained earlier.

\begin{figure}[!ht]
    \centering\leavevmode
    \includegraphics[width=0.75\linewidth]{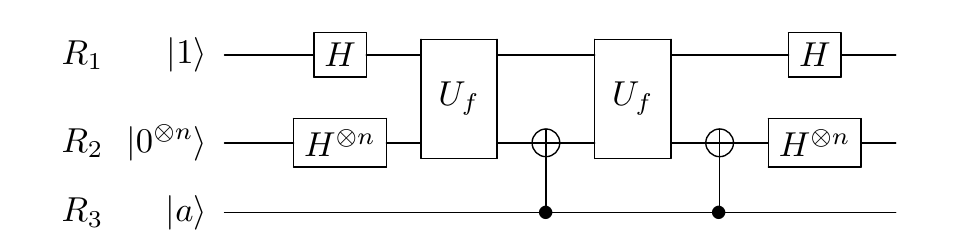}
    \caption{Circuit for $1^{st}$-order Walsh-Hadamard derivative sampling\label{fig:auto-sampling}} 
\end{figure}

Figure~\ref{fig:auto-sampling} shows the quantum circuit for $HoDJ^1_n$; for this problem, $\A$ is a singleton set, say $\{a\}$.
The evolution of the quantum state as the operators are applied is as follows:
\begin{align*}
    \mbox{Initial State} :  ~& \ket{1}\ket{0^{ n}} \ket{a}\\
    \xrightarrow{H \otimes H^{n}} ~& \frac{1}{\sqrt{2^n}}\sum_x\ket{-}\ket{x}\ket{a}\\
    \xrightarrow{U_f} ~& \frac{1}{\sqrt{2^n}}\sum_x{(-1)^{f(x)}\ket{-}\ket{x}}\ket{a} \\
    \xrightarrow{CNOT_2^3} ~& \frac{1}{\sqrt{2^n}}\sum_x{(-1)^{f(x)}}\ket{-}\ket{x\oplus a}\ket{a}\\
    \xrightarrow{U_f} ~& \frac{1}{\sqrt{2^n}}\sum_x{(-1)^{f(x)\oplus f(x\oplus a)}}\ket{-}\ket{x\oplus a}\ket{a}\\
    \xrightarrow{CNOT_2^3} ~& \frac{1}{\sqrt{2^n}}\ket{-}\sum_x{(-1)^{f(x)\oplus f(x\oplus a)}}\ket{x}\ket{a}\\
    \xrightarrow{H\otimes H^{n}} ~& \ket{1}\sum_y\Big[\frac{1}{2^n}\sum_x(-1)^{(x\cdot y)}(-1)^{f(x)\oplus f(x\oplus a)}\Big]\ket{y}\ket{a}\\
    = ~& \ket{1} \sum_y \widehat{\Delta f_a}(y) \ket{y} \ket{a}
\end{align*}

Therefore, at the end of the circuit $R_2$ can be found to be in a state
$\ket{y}$ with probability $\widehat{\Delta f_a}(y)^2$ thus accomplishing the
objective of sampling according to the Walsh-Hadamard distribution of the
1st-order derivative of $f$.

Next, an illustration of $HoDJ^2_n$ corresponding to the 2nd-order derivative is presented in Figure~\ref{fig:sampling-derivative} in which we use $\A=(a,b)$.
We show the state of this circuit after each layer of operators.

\begin{figure}[!ht]
    \centering\leavevmode
    \includegraphics[width=0.9\linewidth]{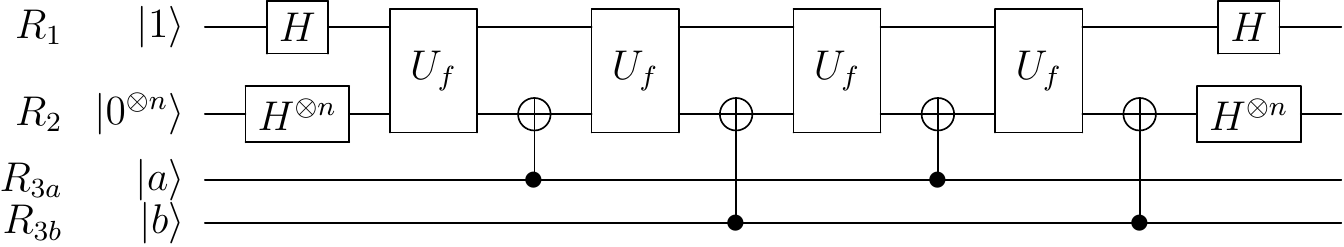}
    \caption{Circuit for Walsh-Hadamard sampling of
    $2^{nd}$-order derivative\label{fig:sampling-derivative}} 
\end{figure}

\begin{align*}
    \mbox{Initial State} :~& \ket{1}\ket{0^n}\ket{a}\ket{b}\\
    \xrightarrow{H \otimes H^n}~ & \sqrtn{n} \sum_{x \in \{0,1\}^n} \ket{-}\ket{x}\ket{a,b}\\
    \stackrel{U_f}{\to}~ & \sqrtn{n} \sum_x (-1)^{f(x)} \ket{-}\ket{x,a,b}\\
    \xrightarrow{CNOT^3_2}~ & \sqrtn{n} \sum_x (-1)^{f(x)} \ket{-} \ket{x\xor a}\ket{a,b}\\
    \xrightarrow{U_f}~ & \sqrtn{n} \sum_x (-1)^{f(x) \xor f(x\xor a)} \ket{-} \ket{x\xor a}\ket{a,b} \\
    \xrightarrow[U_f]{CNOT^4_2,}~ & \sqrtn{n} \sum_x (-1)^{f(x) \xor f(x \xor a) \xor f(x\xor a \xor b)} \ket{-} \ket{x \xor a \xor b} \ket{a,b}\\
    \xrightarrow[U_f]{CNOT^3_2}~ & \sqrtn{n} \sum_x (-1)^{\bigoplus_{S \subseteq \{a,b\}} f(x \xor S)}
    \ket{-} \ket{x \xor b} \ket{a,b} \\
    \xrightarrow{CNOT^4_2}~ & \sqrtn{n} \sum_x (-1)^{\bigoplus_{S \subseteq \{a,b\}} {f(x \xor S)}} \ket{-} \ket{x,a,b} \\
    \xrightarrow{H \otimes H^n}~ & \ket{1} \sum_y \Big[ \frac{1}{2^n} \sum_x (-1)^{x\cdot y} (-1)^{\bigoplus_{S \subseteq \{a,b\}}
	{f(x \xor S)}} \Big] \ket{y} \ket{a,b}
\end{align*}

Measuring $R_2$ at the end will collapse it into $\ket{y}$ for some $y \in \{0,1\}^n$ with probability $\Pr[y]=\Big[ \frac{1}{2^{n}} \sum_x (-1)^{x\cdot y} \Delta f_{(a,b)}(x) \Big]^2$ $=\widehat{\Delta f_{(a,b)}}(y)^2$ that is the square of the Walsh coefficient of $\Delta f_{(a,b)}$ (2nd-order derivative function) at the point $y$.

The circuit can be generalized to higher values of $k$ in a straight forward manner.
The following theorem formalizes this result where we ignore the first register since that contains an ancillary qubit which is reset to its initial state at the end of the computation. 
For counting the number of gates, please note that each of the CNOT gates shown in Figure~\ref{fig:sampling-derivative} actually consists of $n$ 2-qubit CNOT gates applied in parallel.

\begin{theorem}\label{thm:hodj} 
For any $\A = (a_1, a_2, \ldots a_k)$ such that $a_i \in \{0,1\}^n$ $\forall i$, the $HoDJ^k_n$ circuit uses $n+1$ initialized ancilla qubits, employs $k$ registers corresponding to the points in $\A$, makes $2^k$ calls to $U_f$, $\Theta(n2^k)$ calls to $H$ and $CNOT$ gates, has a depth of $2(2^k+1)$ and operates as follows  $$\ket{0^n}\ket{a_1}\ldots\ket{a_k} \xrightarrow{HoDJ^k_n} \sum_y  \widehat{\Delta f_{\A}}(y) \ket{y} \ket{a_1}\ldots\ket{a_k}$$
\end{theorem}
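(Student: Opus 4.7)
The plan is to prove the theorem by induction on $k$, treating the already-worked-out $k=1$ and $k=2$ cases as evidence for the general pattern. The essential structural observation is that the circuit $HoDJ^k_n$ implements a traversal of the $2^k$ subsets of $\A$ via a Gray code, with one $U_f$ invocation contributing a phase $(-1)^{f(x \oplus X_S)}$ per subset $S$.

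More concretely, I fix the reflected binary Gray code ordering $S_0 = \emptyset, S_1, \ldots, S_{2^k-1}$ on subsets of $\A$, together with the closing transition $S_{2^k} = S_0 = \emptyset$. Consecutive subsets $S_{t-1}$ and $S_t$ differ in exactly one element $a_{j(t)} \in \A$, and the interleaved layer of $n$ parallel $CNOT^{j(t)+2}_2$ gates between the $(t-1)$-th and $t$-th queries XORs $a_{j(t)}$ into $R_2$. Thus just before the $t$-th $U_f$ call, $R_2$ holds $\ket{x \oplus X_{S_t}}$, and the $\ket{-}$ in $R_1$ kicks back a phase of $(-1)^{f(x \oplus X_{S_t})}$. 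After all $2^k$ queries one has accumulated
\begin{equation*}
(-1)^{\bigoplus_{t=0}^{2^k-1} f(x \oplus X_{S_t})} = (-1)^{\bigoplus_{S \subseteq \A} f(x \oplus S)} = (-1)^{\Delta f_\A(x)},
\end{equation*}
invoking the closed-form expression for $\Delta f^{(k)}_\A$ stated in the excerpt. Since the Gray code is cyclic, the final closing CNOT layer (corresponding to $S_{2^k}=\emptyset$) restores $R_2$ to $\ket{x}$, so the pre-Hadamard state is $\frac{1}{\sqrt{2^n}} \ket{-} \sum_x (-1)^{\Delta f_\A(x)} \ket{x}\ket{a_1}\cdots\ket{a_k}$, and a terminal $H \otimes H^{\otimes n}$ then produces exactly the claimed output.

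The resource counts are straightforward bookkeeping: there are $2^k$ queries to $U_f$; $2^k$ layers of $n$ parallel CNOT gates (one per Gray-code transition, including the closing one), totalling $n\,2^k$ two-qubit CNOTs; and $n+1$ Hadamards on each end. Under the standard assumption that $U_f$, a parallel CNOT layer, and a Hadamard layer each have depth $1$, the overall depth is $2 + 2\cdot 2^k = 2(2^k+1)$. The initialized ancilla consists of $R_1$ (one qubit) together with $R_2$ ($n$ qubits), i.e., $n+1$ qubits, while $R_3, \ldots, R_{k+2}$ are treated as inputs. The main obstacle is verifying that a Gray-code tour does double duty: it must visit every subset exactly once (so that the accumulated phase sums over all subsets) \emph{and} return $R_2$ to $\ket{x}$ so that the final Hadamard layer cleanly produces the Walsh spectrum without a data-dependent XOR offset. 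Both follow from the cyclic nature of the reflected Gray code on $\{0,1\}^k$, and the inductive doubling construction $G_{k+1} = (0\!\cdot\!G_k,\; 1\!\cdot\!G_k^{\text{rev}})$ matches the natural recursive layering of the circuit when the $(k+1)$-th register is appended.
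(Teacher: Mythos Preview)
Your proposal is correct and follows essentially the same approach as the paper: both construct the core sub-circuit by traversing all $2^k$ subsets of $\A$ via a reflected Gray code, alternating a single $CNOT$ layer (to flip one $a_i$ into $R_2$) with one $U_f$ call (to accumulate the phase $(-1)^{f(x\oplus X_S)}$), and both rely on the cyclic property of the Gray code to return $R_2$ to $\ket{x}$ before the final Hadamard layer. The only cosmetic differences are that the paper places the all-zeros codeword last ($g_{2^k}=0^k$) whereas you place $S_0=\emptyset$ first with a closing $CNOT$, and that your opening mention of ``induction on $k$'' is not actually used---your argument, like the paper's, is a direct Gray-code construction.
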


\begin{proof}
    The circuit is a generalization of those illustrated in Figures \ref{fig:auto-sampling} and \ref{fig:sampling-derivative}. At the core is a sub-circuit that we denote by $C'$ and which acts as $$\ket{x}\ket{a_1}\ldots\ket{a_k}\ket{b} \xrightarrow{C'} \ket{x}\ket{a_1}\ldots\ket{a_k} \ket{b \oplus \Delta f_{\A}}(x) = \ket{x}\ket{a_1}\ldots\ket{a_k} \ket{b \oplus 
    \bigoplus_{S \subseteq \A} f(x \oplus S)}$$
Construction of $C'$ uses a {\em binary reflected Gray code} (BRGC, or ``Gray
    code'' in short) for the set of integers $\{0, 1, \ldots, 2^k-1\}$. Such a BRGC
    will be a sequence of $k$-bit strings (codes) $(g_1, g_2, \ldots, g_{2^k})$ such
    that each $g_i$ is unique and every adjacent code differ at exactly one
    position. Integer $0$ is encoded by the code $0^n$ and without loss of
    generality, let $g_{2^k} = 0^n$. Due to the cyclic property of BRGC,
    $g_{1}$ must be some $k$-bit string with Hamming weight 1.

    $C'$ operates in $2^k$ stages. We will use $\ket{\A}$ as a shorthand for
    $\ket{a_1} \ldots \ket{a_k}$. The initial state of the qubits, before stage
    1, is $\ket{x} \ket{\A} \ket{b}$. Observe that $\bigoplus_{S \subseteq \A}
    f(x \oplus S) = \bigoplus_{i=1}^{2^k} f(x \oplus (g_j \cdot \A))$ in which
    we used the notation  $g_j \cdot \A = (g_j)_1 a_1 \oplus (g_j)_2 a_2 \oplus \ldots (g_j)_k
    a_k$ to denote a a linear combination of some of the $a_i$'s.

    The $j$-th stage of $C'$ creates the state $\ket{x \oplus (g_{j} \cdot \A)} ~\ket{\A}~ \ket{b
	\oplus \bigoplus_{i=1}^{j} f(g_{i} \cdot \A)}$ by making the following transformations.
    \begin{align*}
	& \ket{x \oplus (g_{j-1} \cdot \A)} ~\ket{\A}~ \ket{b \oplus
	\bigoplus_{i=1}^{j-1} f(g_{i} \cdot \A)}\\
        \xrightarrow{CNOT} & \ket{x \oplus (g_{j} \cdot \A)} ~\ket{\A}~ \ket{b
	\oplus \bigoplus_{i=1}^{j-1} f(g_{i} \cdot \A)}\\
        \xrightarrow{U_f} & \ket{x \oplus (g_{j} \cdot \A)} ~\ket{\A}~ \ket{b
	\oplus \bigoplus_{i=1}^{j-1} f(g_{i} \cdot \A) \oplus f(g_j \cdot \A)}
    \end{align*}
    The $CNOT$ operation above is justified since $g_{j-1} \cdot \A$ and $g_j
    \cdot \A$ are both linear combinations of some of the $a_i$'s differing by
    exactly one $a_t$. The $CNOT$ uses the corresponding register $\ket{a_t}$ as
    the control register and the first register qubit as the target register.
    This also holds true for stage 1 since $g_1$ has Hamming weight 1.
    Lastly, observe that the final state after the $2^k$-th stage matches the
    one specified above: $\ket{x}
    ~\ket{\A} ~\ket{b \oplus \bigoplus_{S \subseteq \A} f(x \oplus S)}$.

    It is not hard to calculate that $C'$ also makes the following
    transformation if $\ket{b}$ is replaced by $\ket{-}$.
    $$\ket{x} \ket{a_1 \ldots a_k} \ket{-} \xrightarrow{C'}
    (-1)^{\oplus_{S \subseteq \A} f(x \oplus S)} \ket{x}
    \ket{a_1 \ldots a_k} \ket{-} = (-1)^{\Delta f_{\A}(x)} \ket{x}
    \ket{a_1 \ldots a_k} \ket{-}$$

    The circuit for $HoDJ^k_n$ is constructed as
    \begin{align*}
	~& ~\ket{-} \ket{0^n} \ket{a_1 \ldots a_k} \\
	 \xrightarrow{H^n} ~& ~\sqrtn{n}\sum_x \ket{-} \ket{x} \ket{a_1 \ldots a_k} \\
	 \xrightarrow{C'} ~& ~\sqrtn{n}\sum_x \ket{-} (-1)^{\Delta f_{\A}(x)} \ket{x}
	    \ket{a_1 \ldots a_k} = ~\ket{-} \sqrtn{n}\sum_x (-1)^{\Delta f_{\A}(x)} \ket{x}
	    \ket{a_1 \ldots a_k}\\
	 \xrightarrow{H^n} ~& ~\ket{-} \sum_y \widehat{\Delta f_{\A}}(y) \ket{y} \ket{a_1
	\ldots a_k}
    \end{align*}
    For computing the resource usage of $HoDJ^k_n$, observe that $C'$ is
    implemented above using a depth $2\cdot 2^k$ circuit and each of its stages
    employ one $U_f$ gate and $n$ $CNOT$ gates (that act in parallel on all the
    $n$ qubits of the first register and is
    shown as a single $CNOT$ operation above). This completes the proof of the
    theorem.    
\qed
\end{proof}

A quick observation is that $HoDJ^0_n$ essentially generates $\sum_y \hat{f}(y) \ket{y}$ that is exactly the same output as that of the Deutsch-Jozsa circuit and in fact, the circuit for $HoDJ^0_n$ is exactly same as that of the Deutsch-Jozsa circuit for $n$-bit functions.

\section{Autocorrelation Sampling}\label{sec:sampling}

In section~\ref{sec:derv-sampling} we explained how to sample from the higher
order derivatives of a Boolean function. In this section we present an algorithm
to sample according to a distribution that
is proportional to the autocorrelation coefficients of a function; specifically,
we would like to output $\ket{a}$ with probability proportional to
$\autof(a)^2$. We will use the technique presented in
Section~\ref{sec:derv-sampling} for doing so and will use a key observation
stated in this lemma.

\begin{lemma}\label{lemma:auto-sampling}
$\autof(a) = \widehat{\Delta f^{(1)}_{a}}(0^n)$
\end{lemma}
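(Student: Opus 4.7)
The plan is to prove the identity by directly unfolding both sides from their definitions and observing that they reduce to the same sum. The only nontrivial ingredient is the basic algebraic identity $(-1)^{u \xor v} = (-1)^u (-1)^v$ for $u,v \in \{0,1\}$, which lets us convert an XOR of function values in the exponent into a product of signs.

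First I would start from the right-hand side. By the definition of the Walsh transform stated in the introduction (with $y = 0^n$ so the character $(-1)^{y \cdot x}$ disappears), I have
\[
\widehat{\Delta f^{(1)}_{a}}(0^n) = \frac{1}{2^n} \sum_{x \in \{0,1\}^n} (-1)^{\Delta f^{(1)}_a(x)}.
\]
Next I would substitute the definition of the first-order derivative, $\Delta f^{(1)}_a(x) = f(x) \xor f(x \xor a)$, giving
\[
\widehat{\Delta f^{(1)}_{a}}(0^n) = \frac{1}{2^n} \sum_{x} (-1)^{f(x) \xor f(x \xor a)}.
\]
Then, using $(-1)^{u \xor v} = (-1)^u (-1)^v$, this becomes
\[
\frac{1}{2^n} \sum_{x} (-1)^{f(x)} (-1)^{f(x \xor a)},
\]
which is exactly $\autof(a)$ by definition.

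There is really no obstacle here; the lemma is a matter of chasing definitions, with the only algebraic content being the conversion between XOR in the exponent and multiplication of $\pm 1$ signs. The point of stating the lemma explicitly is conceptual rather than technical: it identifies one specific Walsh coefficient (the one at $0^n$) of the first-order derivative with the autocorrelation coefficient at $a$, which will then be leveraged together with the $HoDJ^1_n$ circuit of Theorem~\ref{thm:hodj} to build the autocorrelation sampler.
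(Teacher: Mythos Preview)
Your proof is correct and follows essentially the same definition-chasing route as the paper: both reduce each side to $\tfrac{1}{2^n}\sum_x (-1)^{f(x)\xor f(x\xor a)}$ using the Walsh transform at $0^n$ and the identity $(-1)^{u\xor v}=(-1)^u(-1)^v$. The only cosmetic difference is that you start from the right-hand side whereas the paper starts from the left.
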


\begin{proof}
    LHS is equal to $\frac{1}{2^n} \sum_x (-1)^{f(x)} (-1)^{f(x\xor a)} =
    \frac{1}{2^n} \sum_x \Delta f^{(1)}_{a}(x)$. Now observe that 
    $\widehat{\Delta f^{(1)}_{a}}(0^n) = \frac{1}{2^n} \sum_x \Delta f^{(1)}_{a}(x)$ and this proves the lemma.
\qed
\end{proof}

The circuit used in Algorithm~\ref{algo:autocor-sampling} is illustrated in Figure~\ref{fig:autocorr-sampling}.

\begin{algorithm}
    \caption{Algorithm for autocorrelation sampling\label{algo:autocor-sampling}}
\begin{algorithmic}[1]
    \State Start with three registers initialized as $\ket{1}$, $\ket{0^n}$,
    and $\ket{0^n}$.
    \State Apply $H^{n}$ to $R_{3}$ to generate the state $\sqrtn{n}\sum_{b
    \in \Ftwo{n}} \ket{1} \ket{0^{n}} \ket{b}$.
    \State Apply $HoDJ^1_n$ on the registers $R_1$, $R_2$ and $R_3$   
    to generate the state
    \Statex $\displaystyle\ket{\Phi} = \sqrtn{n} \ket{1} \sum_{b \in \Ftwo{n}} \sum_{y \in
    \Ftwo{n}} \widehat{\Delta f^{(1)}_{b}}(y) \ket{y} \ket{b}$.
    \State Apply fixed-point amplitude amplification~\cite{YoderFPSearch} on $\ket{\Phi}$ to amplify the probability of observing $R_2$ in the state $\ket{0}$ to $1-\delta$ for any given constant $\delta$
    \State Measure $R_3$ in the standard basis and return the observed outcome
\end{algorithmic}
\end{algorithm}

\begin{figure}[!ht]
	\centering\leavevmode
	\includegraphics[width=.8\linewidth]{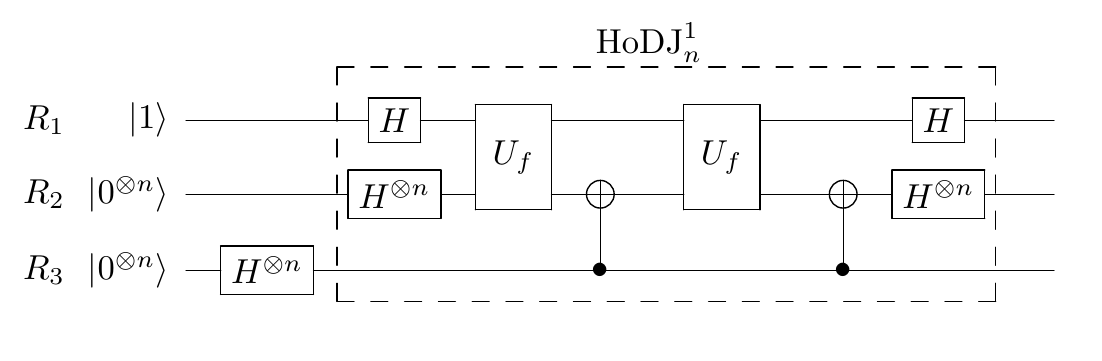}
	\caption{Circuit for autocorrelation sampling\label{fig:autocorr-sampling}}
\end{figure}

\begin{theorem}\label{thm:auto-sampling}
    The observed outcome returned  by Algorithm~\ref{algo:autocor-sampling} is a random sample from the distribution $\{ \autof(a)^2/\sigma_f \}_{a \in \Ftwo{n}}$  with probability at least $1-\delta$. The algorithm makes $O(\frac{2^{n/2}}{\sqrt{\sigma_f}} \log \frac{2}{\delta})$ queries to $U_f$ and uses $O(n\frac{2^{n/2}}{\sqrt{\sigma_f}} \log \frac{2}{\delta})$ gates altogether.
\end{theorem}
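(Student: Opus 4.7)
The plan is to track the amplitudes through Algorithm~\ref{algo:autocor-sampling} and identify the ``good'' subspace on which fixed-point amplitude amplification is invoked, then read off the conditional distribution on $R_3$ and finally tally resources.

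First I would analyse $\ket{\Phi}$ produced after Step~3. By Theorem~\ref{thm:hodj} applied with $\A=(b)$ (where $b$ is held in $R_3$ in superposition), the $HoDJ^1_n$ box acts controlled on $R_3$ and yields
\[
\ket{\Phi} \;=\; \frac{1}{\sqrt{2^n}}\,\ket{1}\sum_{b \in \Ftwo{n}} \sum_{y \in \Ftwo{n}} \widehat{\Delta f^{(1)}_{b}}(y)\,\ket{y}\ket{b}.
\]
I would then compute the probability $p$ of observing $R_2$ in the all-zeros state $\ket{0^n}$. Using Lemma~\ref{lemma:auto-sampling}, the amplitude of the ``good'' component $\ket{1}\ket{0^n}\ket{b}$ is $\frac{1}{\sqrt{2^n}}\,\widehat{\Delta f^{(1)}_b}(0^n) = \frac{1}{\sqrt{2^n}}\,\autof(b)$, so
\[
p \;=\; \frac{1}{2^n}\sum_{b} \autof(b)^2 \;=\; \frac{\sigma_f}{2^n},
\]
and conditioned on this event the normalised state of $R_3$ is $\frac{1}{\sqrt{\sigma_f}}\sum_b \autof(b)\ket{b}$. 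A standard-basis measurement of $R_3$ then returns $a$ with probability exactly $\autof(a)^2/\sigma_f$, which is the distribution claimed.

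Next I would invoke fixed-point amplitude amplification~\cite{YoderFPSearch} on $\ket{\Phi}$ with the ``good'' subspace defined by $R_2=\ket{0^n}$. The relevant reflection about the good subspace is just a controlled phase on $R_2$ (check that it equals $\ket{0^n}$), and the reflection about $\ket{\Phi}$ is implemented as $A\,(I-2\ket{0}\bra{0})\,A^{-1}$, where $A$ denotes Steps~1--3. Fixed-point amplification boosts the probability of landing in the good subspace from $p=\sigma_f/2^n$ to at least $1-\delta$ using $\Theta\!\bigl(\frac{1}{\sqrt{p}}\log\frac{1}{\delta}\bigr) = \Theta\!\bigl(\frac{2^{n/2}}{\sqrt{\sigma_f}}\log\frac{1}{\delta}\bigr)$ invocations of $A$ and $A^{-1}$; conditioned on that event, Step~5 returns the correct sample. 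Combined with a union bound this yields overall success probability at least $1-\delta$ (absorbing a constant factor into the $\log\frac{2}{\delta}$ in the statement).

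Finally I would count resources. Each invocation of $A$ or $A^{-1}$ contains one $HoDJ^1_n$ call, which by Theorem~\ref{thm:hodj} uses $2$ queries to $U_f$ and $\Theta(n)$ Hadamard/CNOT gates, plus $O(n)$ Hadamards for the superposition over $b$ in $R_3$ and $O(n)$ gates for the two reflections. Multiplying by the $O\!\bigl(\frac{2^{n/2}}{\sqrt{\sigma_f}}\log\frac{2}{\delta}\bigr)$ rounds of amplification gives the stated query and gate counts. The main subtlety I expect is the amplification step: ordinary Grover amplification would overshoot unless $\sigma_f$ (hence $p$) is known, so it is essential to cite the fixed-point variant, which achieves monotone convergence to $1-\delta$ given only a \emph{lower bound} on $p$; a conservative lower bound $\sigma_f \ge 1$ suffices if the true $\sigma_f$ is unknown, and the stated complexity follows whenever the true value is used.
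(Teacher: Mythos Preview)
Your proposal is correct and follows essentially the same route as the paper: decompose $\ket{\Phi}$ into the $R_2=\ket{0^n}$ component, use Lemma~\ref{lemma:auto-sampling} to identify its amplitude with $\autof(b)/\sqrt{2^n}$, read off $p=\sigma_f/2^n$ and the conditional state on $R_3$, then apply fixed-point amplification and multiply the per-round cost ($2$ queries, $\Theta(n)$ gates) by the number of rounds. Your added remarks---the explicit description of the two reflections, the reason ordinary Grover would overshoot, and the observation that the stated complexity presumes knowledge of (or a nontrivial lower bound on) $\sigma_f$---are sound elaborations that the paper leaves implicit; the mention of a union bound is unnecessary since there is only the single amplification-success event, but it does no harm.
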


\begin{proof}
    We can write the final state of the circuit in Figure~\ref{fig:autocorr-sampling} as 
    \begin{align*}
    \ket{\Phi} = & \sqrtn{n} \ket{1} \sum_{b \in \Ftwo{n}} \sum_{y \in
    \Ftwo{n}} \widehat{\Delta f^{(1)}_{b}}(y) \ket{y} \ket{b} \\
    = & \ket{1} \otimes \ket{0^n} \otimes \left( \sqrtn{n} \sum_b \widehat{\Delta f_b}(0^n) \ket{b}\right) + \sum_y \ket{1} \ket{y} \otimes \left( \sqrtn{n} \sum_b \widehat{\Delta f_b}(y) \ket{b}\right)\\
	= & \ket{1} \otimes \ket{0^n} \otimes \left( \sqrtn{n} \sum_b \autof(b) \ket{b}\right) + \sum_y \ket{1} \ket{y} \otimes \left( \sqrtn{n} \sum_b \widehat{\Delta f_b}(y) \ket{b}\right)\\
    \end{align*}
     
    Suppose we denote the normalized state $\frac{1}{\sqrt{\sigma_f}} \sum_b \autof(b) \ket{b}$ by $\ket{\Phi'}$ and the state $\sqrtn{n} \sum_b \widehat{\Delta f_b}(y) \ket{b}$ by $\ket{\Phi''_y}$. Then, using Lemma \ref{lemma:auto-sampling} we can rephrase $\ket{\Phi}$ as
    $$ \ket{\Phi} = \sqrt{\frac{\sigma_f}{2^n}} \ket{1} \otimes \ket{0^n} \otimes \ket{\Phi'} + \sum_y \ket{1} \ket{y} \ket{\Phi'_y}$$
    and the probability of observing $R_2$ in state $\ket{0^n}$ as $\sigma_f/2^n$.
    
    Fixed-point amplitude amplification will make $O(\frac{2^{n/2}}{\sqrt{\sigma_f}} \log \frac{2}{\delta})$ calls to the circuit in Figure~\ref{fig:autocorr-sampling} and ensure that the amplitude of the state $\ket{1}\ket{0^n}\ket{\Phi'}$ is at least $\sqrt{1-\delta}$. Therefore, after amplification $R_3$ will be in the state $\ket{\Phi'}$ with probability at least $1-\delta$, and when that happens, the observed state upon measuring $R_3$ would be some $\ket{b}$ with probability $\autof(b)^2/\sigma_f$ --- that is, a sample from the autocorrelation distribution.

    The number of queries required for the whole process is the number of times that amplitude amplification calls the circuit ($O(\frac{2^{n/2}}{\sqrt{\sigma_f}} \log \frac{2}{\delta})$) multiplied by the number of calls to $U_f$ made by the circuit (which is only two). The total number of gates involved is also obtained in a similar manner along with the observation that the circuit uses $\Theta(n)$ which is evident from Figure~\ref{fig:autocorr-sampling}. \qed
\end{proof}

\section{Estimation Algorithms}\label{sec:estimation}

The main problem here is to estimate, with high accuracy and small error (if
any), important functions of an autocorrelation spectrum.

For these algorithms we use the quantum technique of amplitude estimation. We
use a particular version that was recently presented for estimating the
probability of ``success'' of a quantum circuit (where success corresponds to
the output state of the circuit to be in a certain subspace) with additive
accuracy.

\begin{lemma}[\cite{bera2019error}\label{lemma:ampest}]
    Let $\A$ be a quantum circuit without any measurement and let $p$ denote the
    probability of observing its output state in a particular subspace. There is a quantum
    algorithm that makes a total of $\Theta(\frac{\pi}{\epsilon}\log\frac{1}{\delta})$
    calls to (controlled)-$\A$ and returns an estimate $\tilde{p}$ such that,
    $$\Pr[\tilde{p} - \epsilon \le p \le \tilde{p} + \epsilon] \ge 1-\delta$$
    for any accuracy $\epsilon \le \tfrac{1}{4}$ and error $\delta < 1$.
\end{lemma}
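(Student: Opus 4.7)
The plan is to invoke the standard quantum amplitude estimation algorithm of Brassard, H\o{}yer, Mosca and Tapp as a subroutine and then boost its success probability using a median-of-independent-trials amplification. Their procedure, given a circuit $\A$ whose output lies in the marked subspace with probability $p$, uses $M$ (controlled) applications of $\A$ and produces an estimate $\tilde{a}$ of the amplitude $a=\sqrt{p}$ satisfying $|\tilde{a}-a|\le 2\pi\sqrt{p(1-p)}/M + \pi^2/M^2$ with probability at least $8/\pi^2$.

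The first step is to translate this amplitude accuracy into additive accuracy for $p$ itself. Setting $\tilde{p}:=\tilde{a}^2$ and $p=a^2$, we have $|\tilde{p}-p| = |\tilde{a}-a|(\tilde{a}+a) \le 2|\tilde{a}-a| + |\tilde{a}-a|^2$. Choosing $M = \lceil \pi/\epsilon \rceil$ makes each of the two terms in the amplitude bound $O(\epsilon)$, so a single base-level run returns $\tilde{p}$ with $|\tilde{p}-p|\le \epsilon$ with probability at least $8/\pi^2 > 2/3$. The hypothesis $\epsilon\le\tfrac{1}{4}$ is exactly what lets the crude bound $|\tilde{a}+a|\le 2$ absorb the quadratic cross-term without inflating the constant in front of $\epsilon$.

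The second step is the median trick: run the base estimator independently $T = \Theta(\log(1/\delta))$ times to obtain $\tilde{p}_1,\ldots,\tilde{p}_T$, and output their median $\tilde{p}$. Let $X_i$ be the indicator that $|\tilde{p}_i - p|\le \epsilon$; then $\E[X_i]\ge 8/\pi^2 = \tfrac{1}{2}+c$ for an absolute constant $c>0$, and a Hoeffding/Chernoff bound yields $\Pr\!\left[\sum_i X_i > T/2\right] \ge 1-\delta$ once $T = \Theta(\log(1/\delta))$ with a sufficiently large constant. Whenever strictly more than half of the $\tilde{p}_i$ lie in $[p-\epsilon,\, p+\epsilon]$, so does their median. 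The total query cost is $T\cdot M = \Theta(\pi\epsilon^{-1}\log(1/\delta))$, which matches the bound claimed in the lemma.

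The main obstacle is bookkeeping rather than any new technique: one must carefully route the amplitude-estimation error through the squaring $\tilde{a}\mapsto\tilde{a}^2$ without exceeding the $\epsilon\le\tfrac{1}{4}$ regime, and verify that the base success probability $8/\pi^2$ is bounded away from $\tfrac{1}{2}$ by a universal constant so that the median amplification has the clean $\log(1/\delta)$ dependence with the stated $\pi$ factor. Since Lemma~\ref{lemma:ampest} is imported verbatim from \cite{bera2019error}, there is no novel content beyond this packaging.
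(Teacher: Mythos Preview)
The paper does not supply its own proof of this lemma; it is quoted from \cite{bera2019error} and used as a black box, so there is nothing in-paper to compare against. Your sketch---Brassard--H\o{}yer--Mosca--Tapp amplitude estimation followed by a median over $\Theta(\log(1/\delta))$ independent repetitions---is the standard derivation and is correct in outline. One small correction: in BHMT's Theorem~12 the symbol $a$ denotes the success \emph{probability}, not the amplitude $\sqrt{p}$, and the error bound $|\tilde{a}-a|\le 2\pi\sqrt{a(1-a)}/M+\pi^2/M^2$ is already for the probability; the squaring step you introduce is therefore unnecessary (and, as written, would cost you an extra constant factor). Dropping that step and taking $M=\Theta(\pi/\epsilon)$ directly gives $|\tilde{p}-p|\le\epsilon$ with probability at least $8/\pi^2$, after which your median argument goes through verbatim.
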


\subsection{Autocorrelation Estimation}\label{subsec:autocor-estimation}
The objective of this section is to estimate the value of $|\autof(a)|$ for any particular $a \in \{0,1\}^n$; this is identical to
estimating $|\autof(a)|^2$.

First, observe that $\autof(a) = \frac{1}{2^n} \sum_x (-1)^{f(x)} (-1)^{f(x \xor
a)} = \E_x[X_x]$ where the $\pm1$-valued random variable $X_x = (-1)^{f(x) \xor
f(x \xor a)}$ is defined for $x$ chosen uniformly at random from $\{0,1\}^n$.
Therefore, the number of samples needed if we were to classically estimate
$\autof(a)$ with accuracy $\epsilon$ and error $\delta$ is
$O(\frac{1}{\epsilon^2}\log\frac{1}{\delta})$.

The quantum circuit in Figure~\ref{fig:autocorr-sampling} can also be used to estimate $|\autof(a)|$, rather, $\autof(a)^2/2^n$. Recall that the probability of observing $R_2$ in the state $\ket{0^n}$ and $R_3$ in the state $\ket{a}$ (without any amplification) is $\frac{\autof(a)^2}{2^n}$ (refer to the proof of Theorem~\ref{thm:auto-sampling}). Let $F$ denote $\frac{\autof(a)^2}{2^n}$, $\epsilon$ denote the desired accuracy and $\delta$ denote the desired probability of error. Call the algorithm in Lemma~\ref{lemma:ampest} to obtain an estimate $F'$ of $F$ with an accuracy $\epsilon'$ and error probability $\delta$. 
We know from the lemma that with high probability $F' - \epsilon' \le F \le F' + \epsilon'$ which implies that $2^n F' - 2^n \epsilon' \le \autof(a)^2 \le 2^n F' + 2^n \epsilon'$. Therefore, if we use $\epsilon'=\tfrac{\epsilon}{2^n}$ then $2^nF'$ is an $\epsilon$-accurate estimate of $\autof(a)^2$.

However, the number of calls to the circuit will be $\Theta(\tfrac{1}{\epsilon'} \log \tfrac{1}{\delta}) = \Theta(\tfrac{2^n}{\epsilon} \log \tfrac{1}{\delta})$ which is $\Omega(2^n)$; this is clearly undesirable and begging to be bettered.

It may be tempting to improve the above method by first amplifying the probability of observing $R_2$ in the state $\ket{0^n}$ and {\em then} estimating the probability of observing $R_3$ in the state $\ket{a}$. However, for amplitude estimation at this stage the probability of $R_2,R_3$ to be in the state $\ket{0^n} \otimes \ket{a}$ should be exactly $c\autof(a)^2$ for some known constant $c$; since $\sigma_f$ is not known, fixed-point amplitude amplification cannot guarantee a knowledge of the exact probability after amplification. Thus it is unclear if amplitude amplification followed by amplitude estimation can lead to a better estimation algorithm.

Now we will describe a quantum algorithm for the aforementioned task aiming for a better query complexity.
Our technical objective will be to
generate a state with a probability that is related to $|\autof(a)|^2$ but much
higher than that in the earlier approach and our main tool will be the quantum technique of ``swap test''.

\begin{figure}[!h]
    \begin{minipage}{0.5\linewidth}
    \centering \leavevmode
	\includegraphics[width=0.6\linewidth]{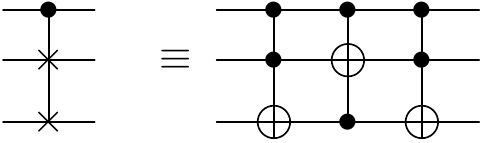}
    \end{minipage}%
    \begin{minipage}{0.5\linewidth}
    \centering \leavevmode
	\includegraphics[width=0.5\linewidth]{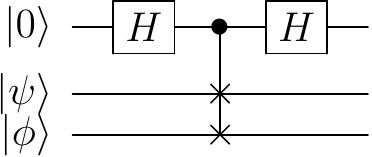}
    \end{minipage}
    \caption{Swap-gate (left) and quantum circuit for swap-test
    (right)\label{fig:swaptest}}
\end{figure}

Suppose we have two registers over the same number of qubits that are in states
denoted by $\ket{\psi}$ and $\ket{\phi}$. The swap test circuit, denoted by $ST$
and illustrated in Figure~\ref{fig:swaptest}, uses an
additional qubit initialized to $\ket{0}$ and applies a
conditional swap-gate in a clever manner such that if the first
(single-qubit) register is measured, then $\ket{0}$ is
observed with probability $\frac{1}{2} [1 + |\braket{\psi}{\phi}|^2]$.
It is easy to show that the circuit performs the following transformation.
\begin{multline*}
\displaystyle \ket{0} \ket{\psi} \ket{\phi} \xrightarrow{ST} \ket{0} \otimes \frac{1}{2}
\Big[\ket{\psi}\ket{\phi} + \ket{\phi}\ket{\psi}\Big] +
\ket{1} \otimes \frac{1}{2} \Big[\ket{\psi}\ket{\phi} -
\ket{\phi}\ket{\psi}\Big]
\end{multline*}

\begin{algorithm}[!h]
    \caption{Autocorrelation estimation at point $a$\label{algo:autocor-est}}
\begin{algorithmic}[1]
    \Require Parameters: $\epsilon$ (confidence), $\delta$ (error)
    \State Start with four registers of which $R_1$ is initialized to $\ket{a}$,
    $R_2$ to $\ket{0}$, and $R_3,R_4$ to $\ket{0^n}$.
    \State Apply these transformations.
	\begin{itemize}
	    \item[] $\ket{a} \ket{0} \ket{0^n} \ket{0^n}$
	    \item[] $\xrightarrow{H^n \otimes H^n} \ket{a} \ket{0} \Big( \sqrtn{n} \sum_x
		\ket{x} \Big) \Big( \sqrtn{n} \sum_y \ket{y} \Big)$
	    \item[] $\xrightarrow{CNOT} \ket{a} \ket{0} \Big( \sqrtn{n} \sum_x
		\ket{x} \Big) \Big( \sqrtn{n} \sum_y \ket{y \xor a} \Big)$
	    \item[] $\xrightarrow{U_f \otimes U_f} \ket{a} \ket{0} \Big( \sqrtn{n} \sum_x
		(-1)^{f(x)} \ket{x} \Big) \Big( \sqrtn{n} \sum_y (-1)^{f(y \xor a)} \ket{y \xor a} \Big)$
		
		\Comment{\small Uses reusable $\ket{-}$}
	    \item[] $\xrightarrow{CNOT} \ket{a} \ket{0} \Big( \sqrtn{n} \sum_x
		(-1)^{f(x)} \ket{x} \Big) \Big( \sqrtn{n} \sum_y (-1)^{f(y \xor a)} \ket{y} \Big)$ 
		
	    \item[] $= \ket{a} \ket{0} \ket{\psi} \ket{\phi_a}$
		\begin{itemize}
		    \item Normalized state $\sqrtn{n} \sum_x (-1)^{f(x)} \ket{x}$ denoted $\psi$
		    \item Normalized state $\sqrtn{n} \sum_y (-1)^{f(y \xor a)} \ket{y}$ denoted $\phi_a$
		\end{itemize}
	\end{itemize}
	\State Apply $ST$ on $R_2,R_3$ and $R_4$ to obtain $$\ket{a} \Big[ \ket{0}
    \otimes \frac{1}{2} \big(\ket{\psi}\ket{\phi_a} + \ket{\phi_a}\ket{\psi}\big) +
    \ket{1} \otimes \frac{1}{2} \big(\ket{\psi}\ket{\phi_a} -
    \ket{\phi_a}\ket{\psi}\big) \Big]$$
	\State $\ell \leftarrow$ estimate the probability of observing $R_2$ in the state $\ket{0}$ 
    with accuracy $\pm\frac{\epsilon}{2}$ and error $\delta$
	\State Return $2\ell-1$ as the estimate of $|\autof(a)|^2$
\end{algorithmic}
\end{algorithm}

\begin{figure}[!h]
	\includegraphics[width=0.7\linewidth]{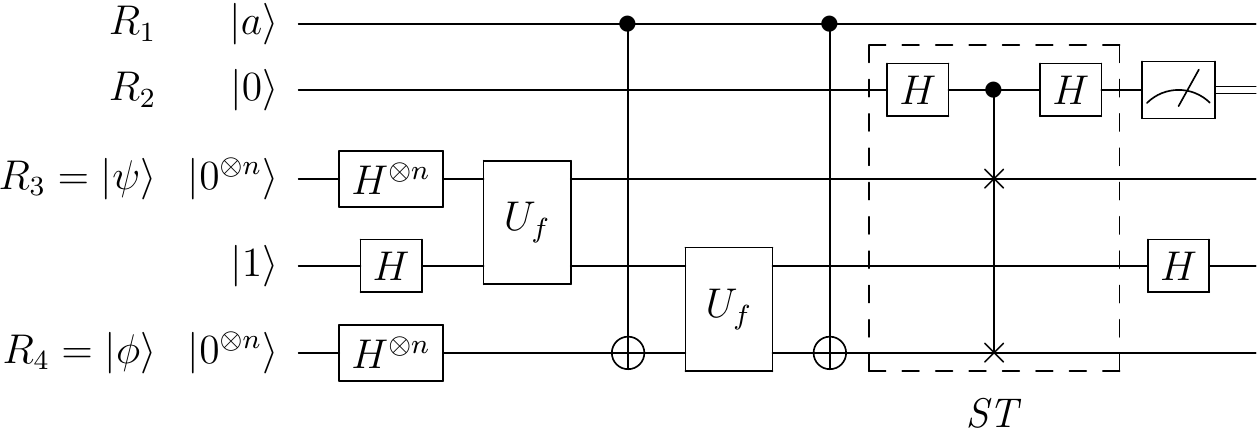}
	\centering \leavevmode
	\caption{Circuit for estimation of autocorrelation coefficient at a
		point $a$\label{fig:estimation}}
\end{figure}

Our algorithm for estimation of $|\autof(a)|^2$ is presented in
Algorithm~\ref{algo:autocor-est} and a circuit diagram is given in
Figure~\ref{fig:estimation}. We do not show the $\ket{1}$ qubit in the algorithm; it is merely used, in the form $\ket{-}$, to apply the $U_f$ gate in a phase-version.

Obviously, an accurate estimation of
$\frac{1}{2} [ 1 + |\autof(a)|^2 ]$ will automatically lead to an accurate
estimation of $|\autof(a)|^2$.
Observe that $\frac{1}{2} [ 1 + |\autof(a)|^2 ] \gg |\autof(a)|^2/2^n$ and
therefore, estimation using Algorithm~\ref{algo:autocor-est} is more efficient
compared to that obtained from autocorrelation sampling (describe earlier in
this section).

\begin{theorem}
    Algorithm~\ref{algo:autocor-est} makes $\Theta\left(\frac{\pi}{\epsilon}\log \frac{1}{\delta}\right)$ calls to $U_f$ and returns an estimate $\alpha$ such that
    $$\Pr \left[ \alpha - \epsilon \le \autof(a)^2 \le \alpha + \epsilon \right] \ge 1-\delta$$
\end{theorem}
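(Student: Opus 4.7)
The plan is to analyze Algorithm~\ref{algo:autocor-est} in three stages: first verify that the state fed into the swap test encodes $\autof(a)$ as an inner product, then translate the swap test probability to an amplitude-estimation task, and finally do the bookkeeping for accuracy and query cost.

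First I would compute $\braket{\psi}{\phi_a}$ directly from the definitions given in the algorithm. Since $\ket{\psi} = \sqrtn{n}\sum_x (-1)^{f(x)}\ket{x}$ and $\ket{\phi_a} = \sqrtn{n}\sum_y (-1)^{f(y\xor a)}\ket{y}$, orthonormality of the computational basis collapses the double sum to $\braket{\psi}{\phi_a} = \frac{1}{2^n}\sum_x (-1)^{f(x)\xor f(x\xor a)} = \autof(a)$. In particular $|\braket{\psi}{\phi_a}|^2 = \autof(a)^2$ (so the absolute value squared equals the signed square here), and the standard swap-test identity stated right before the algorithm tells us that the probability of observing $R_2$ in $\ket{0}$ is exactly $p := \tfrac{1}{2}\bigl[1 + \autof(a)^2\bigr]$.

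Next I would invoke Lemma~\ref{lemma:ampest} with the quantum circuit $\A$ being the composition of the state-preparation steps and the swap test, and the ``success subspace'' being $\ket{0}$ on $R_2$. Choosing accuracy $\epsilon' = \epsilon/2$ and error $\delta$ yields an estimate $\ell$ such that $\Pr[|\ell - p|\le \epsilon/2]\ge 1-\delta$. Setting $\alpha := 2\ell - 1$ gives $|\alpha - \autof(a)^2| = 2|\ell - p| \le \epsilon$ on the same high-probability event, which is exactly the guarantee claimed in the theorem. I would also note that the condition $\epsilon/2 \le 1/4$ required by the lemma is easily satisfied for meaningful $\epsilon$, and for larger $\epsilon$ the theorem is trivial since $\autof(a)^2 \in [0,1]$.

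For the query count, each invocation of $\A$ uses exactly two applications of $U_f$ (one per copy in the two branches that form $\ket{\psi}$ and $\ket{\phi_a}$); the Hadamards, CNOTs, and the swap test contribute no oracle calls. Lemma~\ref{lemma:ampest} makes $\Theta\!\bigl(\tfrac{\pi}{\epsilon/2}\log\tfrac{1}{\delta}\bigr) = \Theta\!\bigl(\tfrac{\pi}{\epsilon}\log\tfrac{1}{\delta}\bigr)$ calls to $\A$, giving the advertised total of $\Theta\!\bigl(\tfrac{\pi}{\epsilon}\log\tfrac{1}{\delta}\bigr)$ queries to $U_f$. I do not anticipate a real obstacle in this proof: the only delicate point is keeping the factor of $2$ rescaling straight when converting the additive error on $p$ to an additive error on $\autof(a)^2$, and making sure that the ``success probability'' of the circuit that amplitude estimation is targeting is specifically the measurement of the swap-test ancilla $R_2$, not anything involving $R_1$, $R_3$, or $R_4$.
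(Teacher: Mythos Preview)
Your proposal is correct and follows essentially the same approach as the paper: compute $\braket{\psi}{\phi_a}=\autof(a)$, use the swap-test to identify $p_0=\tfrac{1}{2}[1+\autof(a)^2]$, apply Lemma~\ref{lemma:ampest} with accuracy $\epsilon/2$, and rescale via $\alpha=2\ell-1$. The paper rederives the swap-test probability by explicitly computing $\|\ket{\chi^0_a}\|^2$ rather than citing the identity, but otherwise the argument and the query-count analysis are identical.
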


\begin{proof}
    Let $\ket{\chi^0_{a}}$ denote the state $\half \ket{\psi}\ket{\phi_a} +
		\half\ket{\phi_a}\ket{\psi}$. Observe that
    $$\left\| \ket{\chi^0_a} \right\|^2 = \tfrac{1}{4} \left[ 2 \big\| \ket{\psi} \big\| \cdot \big\| \ket{\phi_a} \big\| + 2 \big| \braket{\psi}{\phi_a} \big|^2 \right] = \tfrac{1}{2}\left[ 1 + \big| \braket{\psi}{\phi_a} \big|^2\right] $$
    From Step-3 of the algorithm, the probability of observing $R_2$ in the state $\ket{0}$, say denoted $p_0$, can be expressed as $\big\| \ket{\chi_a^0} \big\|^2$.
    
    Further observe that $\braket{\psi}{\phi} = \frac{1}{2^n} \sum_x (-1)^{f(x)} (-1)^{f(x\xor a)} = \autof(a)$. Therefore, $p_0 = \tfrac{1}{2} + \tfrac{1}{2}\autof(a)^2$ and $\ell$ is an estimate of $p_0$ such that,
    \begin{align*}
	1 - \delta & \le \Pr\left[\ell - \tfrac{\epsilon}{2} \le p_0 \le \ell + \tfrac{\epsilon}{2}\right]\\
	& = \Pr[2\ell - \epsilon \le 2p_0 \le 2\ell + \epsilon]\\
	& = \Pr[2\ell - 1 -\epsilon \le 2p_0-1 \le \ell -1 + \epsilon] \\
	& = \Pr[\alpha - \epsilon \le \autof(a)^2 \le \alpha + \epsilon] \tag{$\because$ $\alpha=2\ell-1$}
    \end{align*}
This shows that $2\ell-1$ is an $\epsilon$-accurate estimate of $\autof(a)^2$.
    
    For analysing the number of queries to $U_f$, first observe that the circuit to obtain the state in Step-3 of the algorithm (see Figure~\ref{fig:estimation}) uses only two calls to $U_f$. The amplitude estimation procedure of Lemma~\ref{lemma:ampest} shall make $\Theta\left( \tfrac{\pi}{\epsilon} \log \tfrac{1}{\delta} \right)$ calls to this circuit, giving a total of $\Theta\left( \tfrac{\pi}{\epsilon} \log \tfrac{1}{\delta} \right)$ calls to $U_f$. \qed
\end{proof}

The above theorem shows how to estimate $\autof(a)^2$ using a quantum algorithm that shows a quadratic speedup over a classical sampling-based algorithm. However, there remains the question of estimating $\autof(a)$ when its value is 0. In the above approach, $p_0$ shall be $\tfrac{1}{2}$, and therefore, $\ell \le \tfrac{1}{2} + \epsilon/2$. This implies that the estimate for $\autof(a)^2$ shall only satisfy $\alpha \le \epsilon$. A minor improvement may be added to Algorithm~\ref{algo:autocor-est} to handle this situation that we now describe.

First apply the previously mentioned technique of applying amplitude
estimation on the output state of sampling algorithm from
Section~\ref{sec:sampling} but using a very high $\epsilon$. Note that amplitude estimation does not err when the
probability it is estimating is 0. Then run
Algorithm~\ref{algo:autocor-est} as usual and return the minimum of the two
estimates. In case $\autof(a)=0$, the first amplitude estimation will correctly return 0 as the estimate. We skip the details due to lack of space in this paper.

\subsection{Estimation of Sum-of-Squares Indicator}\label{subsec:est_s_f}
In this section we consider the problem of estimating the sum-of-squares indicator $\sigma_f$.
As before the objective will be to obtain an estimate with $\epsilon$ accuracy and $\delta$ probability of error. Since $\sigma_f \ge 1$, typical values of $\epsilon$ will be 1 or more.

We first discuss a classical sampling-based approach.
Let $a,b,c$ be three random variables chosen uniformly at random from $\Ftwo{n}$
such that $b \not= c$ and let $X_{a,b,c}$ be the $\pm1$-valued random variable $(-1)^{f(a
\xor b)} (-1)^{f(a \xor c)}$.
We first express $\sigma_f$ as the expectation of these random variables.

\begin{align*}
	\sigma_f & = \sum_{a \in \Ftwo{n}} \autof(a)^2 
	= \sum_{a \in \Ftwo{n}} \Big[ \frac{1}{2^n} \sum_{b \in \Ftwo{n}} (-1)^{f(b) \xor f(b \xor a)} \Big]^2 \\
	& = \frac{1}{2^{2n}} \sum_{a \in \Ftwo{n}} \Big[ 2^n + \sum_{{b\not= c
			\atop b,c \in \Ftwo{n}}} (-1)^{f(a \xor b) \xor f(a \xor c)} \Big]\\
	& = 1 + \frac{1}{2^{2n}} \sum_{a \in \Ftwo{n} \atop b\not= c} (-1)^{f(a
		\xor b) \xor f(a \xor c)}\\
	& = 1 + (2^n-1) \E_{a,b,c}[X_{a,b,c}]
\end{align*}

 Note that $\E[X_{a,b,c}] = \frac{\sigma_f - 1}{2^n -
	1} \approx \frac{\sigma_f}{2^n}$. One way to estimate $\E[X_{a,b,c}]$ is to use
multiple independent samples of $a,b,c$. Since each sample of $X_{a,b,c}$
requires 2 calls to $f()$, therefore $O(\frac{1}{\epsilon'^2} \log
\frac{1}{\delta})$ calls to $f()$ would be sufficient to estimate
$\E[X_{a,b,c}]$ with $\epsilon'$ accuracy and $\delta$ error. 
Suppose $\tilde{X}$ is the estimate that we obtain; since it satisfies
$$\Pr[\tilde{X} - \epsilon' \le \E[X_{a,b,c}] \le \tilde{X} + \epsilon'] \ge 1 - \delta$$
then an estimate of $\sigma_f=1+(2^n-1)\E[X_{a,b,c}]$ can be obtained by $1+(2^n-1)\tilde{X}$.
It follows that
$$\Pr\Big[{1+(2^n-1)\tilde{X}} - \epsilon'(2^n-1) \le \sigma_f \le {1+(2^n-1)\tilde{X}} + \epsilon'(2^n-1) \Big] \ge 1-\delta$$
Thus, if we want to estimate $\sigma_f$ with accuracy $\epsilon$, we have to set $\epsilon'=\frac{\epsilon}{2^n-1} \approx \frac{\epsilon}{2^n}$.
The number of calls to $f()$ then becomes $O(\frac{2^{2n}}{\epsilon^2} \log \frac{1}{\delta})$ which is only marginally better than the $\Theta(2^{2n})$ classical non-randomized process of computing all autocorrelation values and then summing them up.

On the quantum side, the circuit in Figure~\ref{fig:autocorr-sampling} can help us in estimating the sum-of-squares indicator of $f$.
Since the probability of observing $R_2$ (in Figure~\ref{fig:autocorr-sampling}) to be in the state $\ket{0^n}$ is $\sigma_f/2^n$, 
Lemma~\ref{lemma:ampest} can be used to efficiently estimate $\sigma_f/2^n$.
The number of calls to $U_f$ shall be $\Theta\left(\frac{2^n}{\epsilon} \log \frac{1}{\delta}\right)$ following the same analysis that was done in Section~\ref{subsec:autocor-estimation}. Thus we get a quadratic improvement over the classical sampling algorithm.

We tried to improve upon this method by using the swap-test technique of Section~\ref{subsec:autocor-estimation} and running Algorithm~\ref{algo:autocor-est} with initial state $\sqrtn{n}\sum_x \ket{x}\ket{0}\ket{0^n}\ket{0^n}$. We can estimate the probability of observing the output qubit in the state $\ket{0}$ using a {\em relative accuracy} quantum estimation approach. However, the number of calls to $U_f$ remained the same $\Theta\left(\frac{2^n}{\epsilon} \log \frac{1}{\delta}\right)$.

\section{Conclusion}
Autocorrelation spectrum is a very important tool for designing Boolean functions with good cryptographic properties and also for mounting
differential attacks of cryptosystems. In this paper we design several efficient quantum algorithms that analyse different aspects of Boolean functions that are related to their autocorrelation spectra. We first show that the Deutsch-Jozsa algorithm can be suitably extended to sample the Walsh spectrum of
any derivative. Further, we specifically concentrate on the autocorrelation spectrum of a Boolean function. We present an algorithm to sample according
to a distribution that is proportional to the autocorrelation coefficients of a Boolean function. Finally we consider the estimation of some values or some functions
of autocorrelation coefficients with high accuracy and small error. Our algorithms will have applications to evaluate the cryptographic properties 
of a Boolean function in a significantly faster manner than in classical paradigm.

\section*{Acknowledgements}
The second author acknowledges the support from the project ``Cryptography \& Cryptanalysis: How far can we bridge the gap between Classical and Quantum Paradigm'', awarded under DAE-SRC, BRNS, India.
\bibliographystyle{plain}

\bibliography{refs}{}

\end{document}